\theoremstyle{plain}
\newtheorem{lemma}{Lemma}
\newtheorem{theorem}{Theorem}
\newtheorem{proposition}{Proposition}
\theoremstyle{definition}
\newtheorem{definition}{Definition}
\theoremstyle{remark}
\newtheorem{example}{Example}
\newtheorem{remark}{Remark}
\tikzset{>=stealth}
\tikzstyle{node} = [circle, minimum size = 1.4mm, inner sep = 0mm, color=black, fill]
\tikzstyle{hyperedge} = [rectangle, minimum width = 5mm, minimum height = 5mm, draw, inner sep = 0mm]
\tikzstyle{hyperedgewide} = [rectangle, minimum width = 8mm, minimum height = 5mm, draw, inner sep = 0mm]
\tikzstyle{HG} = [align = center]
\tikzstyle{circledge} = [circle, minimum size = 7mm, inner sep = 0mm, color=black, draw]
\newcommand{\diam}{\Diamond}
\newcommand{\eqdef}{\mathrel{\mathop:}=}
\newcommand{\SG}{\mathop{\mathrm{sg}}}
\newcommand{\LC}{\mathrm{L}}
\newcommand{\NL}{\mathrm{NL}}
\newcommand{\NLM}{\mathrm{NL}\diam}
\newcommand{\I}{\mathbb{I}}
\newcommand{\J}{\mathbb{J}}
\newcommand{\A}{\mathbb{A}}
\newcommand{\C}{\mathbb{C}}
\newcommand{\Dis}{\mathrm{D}}
\newcommand{\HL}{\mathrm{HL}}
\newcommand{\MILLFO}{\mathrm{MILL1}}
\newcommand{\htree}{\mathrm{ht}_\mu}
\newcommand{\htreebar}{\mathrm{ht}_{\bar{\mu}}}
\newcommand{\head}{\mathop{\mathrm{head}}}
\newcommand{\LP}{\mathrm{LP}}
\newcommand{\rk}{\mathit{rk}}
\title{Multimodality in the Hypergraph Lambek Calculus}
\author{
Tikhon Pshenitsyn
\institute{
	Lomonosov Moscow State University, GSP-1, Leninskie Gory, Moscow, 119991, Russia
\\
	Steklov Mathematical Institute of Russian Academy of Sciences, 8 Gubkina St. Moscow,
	119991, Russia
}
\thanks{The study was funded by the Interdisciplinary Scientific and Educational School of Moscow University ``Brain, Cognitive Systems, Artificial Intelligence''; the Theoretical Physics and Mathematics Advancement Foundation ``BASIS''.
}
\email{ptihon at yandex.ru}
}
\begin{document}
	\maketitle
\begin{abstract}
	The multimodal Lambek calculus $\NLM$ is an extension of the Lambek calculus that includes several product operations (some of them being commutative or/and associative), unary modalities, and corresponding residual implications. In this work, we relate this calculus to the hypergraph Lambek calculus $\HL$. The latter is a general pure logic of residuation defined in a sequent form; antecedents of its sequents are hypergraphs, and the rules of $\HL$ involve hypergraph transformation.
	Our main result is the embedding of the multimodal Lambek calculus (with at most one associative product) in $\HL$. It justifies that $\HL$ is a very general Lambek-style logic and also provides a novel syntactic interface for $\NLM$: antecedents of sequents of $\NLM$ are represented as tree-like hypergraphs in $\HL$, and they are derived from each other by means of hyperedge replacement. The advantage of this embedding is that commutativity and associativity are incorporated in the sequent structure rather than added as separate rules. Besides, modalities of $\NLM$ are represented in $\HL$ using the product and the division of $\HL$, which explicitizes their residual nature.
\end{abstract}

\section{Introduction}\label{sec_introduction}

The Lambek calculus $\LC$ introduced in \cite{Lambek58} is a logic developed to model syntax of natural languages. It has the non-commutative product operation $\bullet$ and two adjoints $\backslash, /$ satisfying the residuation law: $A \bullet B \to C$ is derivable in $\LC$ if and only if so is $A \to C/B$ if and only if so is $B \to A \backslash C$. Despite being non-commutative, the product is associative, i.e. $(A \bullet B) \bullet C \leftrightarrow A \bullet (B \bullet C)$ is derivable in $\LC$. One can either accept or reject associativity and commutativity of $\bullet$, which results in four logics: $\LC$, $\NL$ (non-associative, non-commutative), $\LP$ (associative and commutative), $\mathrm{NLP}$ (non-associative, commutative).

For linguistic purposes, one would like to have control over associativity and commutativity. To do this, one can introduce several modes by considering several products $\bullet_i$ for $i \in \I$ ($\I$ is the set of modes) with divisions corresponding to them and then define structural rules for each of the operations independently. Besides, one can also add unary operations, i.e. modalities, to $\LC$. The residuation law, which ``lies at the heart of categorial type logic'' \cite{Moortgat96}, is adapted for modalities as follows: a modality $\diam$ has a pair $\square$, and $\diam A \to B$ is derivable if and only if $A \to \square B$ is derivable (this means that $\diam$ and $\square$ form Galois connection). The Lambek calculus with modes and unary operations, which is called \emph{the multimodal Lambek calculus $\NLM$}, is introduced in \cite{Moortgat96}; its overview can also be found in \cite{MootR12}. In general, a family of modalities $\square_j$, $\diam_j$ for $j \in \J$ is considered (assume that $\I \cap \J = \emptyset$). 

The calculus $\NLM$ is defined as follows (in a sequent way): if $\mathcal{F}(\NLM)$ is the set of formulas of $\NLM$, which are built using $\bullet_i,\backslash_i,/_i,\square_j,\diam_j$, then \emph{the set of structured databases} $\mathcal{T}$ is defined as $\mathcal{T} \eqdef \mathcal{F}(\NLM) \mid (\mathcal{T}, \mathcal{T})^i \mid \langle\mathcal{T}\rangle^j$ where $i \in \I$, $j \in \J$; in other words, there are several kinds of brackets corresponding to products and several kinds of brackets corresponding to modalities. A sequent is of the form $\Pi \to A$ where $\Pi$ is a structured database and $A$ is a formula. The rules of $\NLM$ are the following:

$$
\infer[(ax)]
{
	A \to A
}
{
}
\quad
\infer[(\bullet_i L)]
{
	\Delta[A \bullet_i B] \to C
}
{
	\Delta[(A, B)^i] \to C
}
\quad
\infer[(\bullet_i R)]
{
	(\Pi_1, \Pi_2)^i \to A_1 \bullet_i A_2
}
{
	\Pi_1 \to A_1
	&
	\Pi_2 \to A_2
}
$$
$$
\infer[(\backslash_i L)]
{
	\Delta[(\Pi, B \mathbin{\backslash_i} A)^i] \to C
}
{
	\Delta[A] \to C
	&
	\Pi \to B
}
\quad
\infer[(\backslash_i R)]
{
	\Pi \to B \mathbin{\backslash_i} A
}
{
	(B, \Pi)^i \to A
}
\quad
\infer[(/_i L)]
{
	\Delta[(A \mathbin{/_i} B, \Pi)^i] \to C
}
{
	\Delta[A] \to C
	&
	\Pi \to B
}
\quad
\infer[(/_i R)]
{
	\Pi \to A \mathbin{/_i} B
}
{
	(\Pi, B)^i \to A
}
\quad
$$
$$
\infer[(\diam_j L)]
{
	\Delta[\diam_j A] \to C
}
{
	\Delta[\langle A \rangle^j] \to C
}
\quad
\infer[(\diam_j R)]
{
	\langle \Pi \rangle^j \to \diam_j(C)
}
{
	\Pi \to C
}
\quad
\infer[(\square_j L)]
{
	\Delta[\langle \square_j A\rangle^j] \to C
}
{
	\Delta[A] \to C
}
\quad
\infer[(\square_j R)]
{
	\Pi \to \square_j C
}
{
	\langle  \Pi \rangle^j \to C
}
$$
This logic can be called \emph{the pure logic of residuation} for $\bullet_i,\backslash_i,/_i,\square_j,\diam_j$ \cite{Moortgat96}. In general, one would like to distinguish a subset $\C \subseteq \I$ and $\A \subseteq \I$ and add the following postulates for $c \in \C$ and $a \in \A$:
$$
\infer[(P)]
{
	\Delta[(\Pi_2, \Pi_1)^c] \to C
}
{
	\Delta[(\Pi_1, \Pi_2)^c] \to C
}
\quad
\infer[(A_l)]
{
	\Delta[((\Pi_1, \Pi_2)^a,\Pi_3)^a] \to C
}
{
	\Delta[(\Pi_1, (\Pi_2,\Pi_3)^a)^a] \to C
}
\quad
\infer[(A_r)]
{
	\Delta[(\Pi_1, (\Pi_2,\Pi_3)^a)^a] \to C
}
{
	\Delta[((\Pi_1, \Pi_2)^a,\Pi_3)^a] \to C
}
$$
Apart from them, one can introduce non-linear rules like weakening or contraction. It is also possible to add various structural postulates for modalities as well as interaction principles for operations, which results in a huge family of logics. Our attention, however, will be devoted only to the above rules. 

There are many modifications of the Lambek calculus apart from $\NLM$. Most of them can be introduced in a sequent form, which is convenient for proof search and proof analysis. These modifications have different sets of operations (compare e.g. $\LC$, $\NLM$ and the displacement calculus \cite{MorrillVF11}), and sequents have different forms. One might be interested in whether these different calculi can be treated uniformly using some general formalism. In particular, one would expect that the products $\bullet_i$ and the modalities $\diam_j$ are treated in this formalism as particular cases of the same operation as well as the divisions and the $\square_j$ modalities (since $\diam_j$ is a truncated product and $\square_j$ is its residual implication as noted in \cite{Moortgat96}).

One of general logics of pure residuation is the first-order multiplicative intuitionistic linear logic $\MILLFO$ \cite{MootP01}. In \cite{Moot14, MootP01}, it is shown that $\LC$, $\LP$, $\NL$, and the displacement calculus $\Dis$ (defined in \cite{MorrillVF11}) can be embedded in $\MILLFO$. However, in the section of \cite{MootP01} devoted to the Lambek calculus with modalities $\square$, $\diam$, the authors say that ``it is unclear [...] if it is possible to give an embedding translation for these connectives''. After that, no work on finding this translation has been published, to our best knowledge. Besides, the authors define an embedding of $\mathrm{NLP}$ in $\MILLFO$ but then they say that an additional rule must be added to $\mathrm{NLP}$ in order for the translation to be correct.

Another generalized logic of pure residuation is the hypergraph Lambek calculus $\HL$ defined in \cite{Pshenitsyn22}. It is defined as a sequent calculus that operates on hypergraph formulas and sequents using the hyperedge replacement operation \cite{DrewesKH97}. Sequents of $\HL$ are of the form $H \to C$ where $C$ is a formula of $\HL$ and $H$ is any hypergraph, whose hyperedges are labeled by formulas of $\HL$. Formulas are built from primitive ones using the hypergraph product operation $\times$ and the hypergraph division operation $\div$. The inductive definition of $\HL$ says e.g. that if $M$ is a hypergraph labeled by formulas, then $\times(M)$ is a formula. 

In \cite{Pshenitsyn22}, it is proved that the Lambek calculus can be embedded in $\HL$; in this embedding, antecedents of sequents of $\LC$ are represented as string graphs in $\HL$. It is also shown there how to embed $\LP$ in $\HL$ by using hypergraphs of the star shape where any permutation of edges leads to an isomorphic graph. 

\emph{The main result of this paper} is the embedding of $\NLM$ in $\HL$ if there is at most one associative mode ($|\A| \le 1$). The main merit of this embedding can be explained on the following example. If we are given the Lambek calculus, we can consider it as $\NLM$ without modalities and with one mode $\alpha$, for which the associativity rules $(A_l)$, $(A_r)$ are added. However, $\LC$ can also be defined by changing the structure of sequents: namely, let us say that a sequent is of the form $A_1,\dotsc,A_n \to B$ where the antecedent is a string of types instead of a structured database. Then, associativity is incorporated in the sequent structure, so no rules for it are required. Similarly, the logic $\LP$ can be obtained from $\LC$ by adding the permutation rule $(P)$; however, instead one can define antecedents of sequents as multisets, thus making the permutation rule a part of the sequent. Our embedding does the same in a much more general case: given any sets $\I$, $\J$, $\C$, $\A$ (with $|\A|\le 1$), we represent antecedents of sequents of $\NLM$ in $\HL$ as certain kinds of hypergraphs that incorporate all the structural postulates for the products. Such hypergraphs are similar to hypertrees considered in \cite{Moot08}, however, they are more general than those from the cited article.

In Section \ref{sec_preliminaries}, we introduce preliminary notions concerning hypergraphs and hyperedge replacement. In Section \ref{ssec_HL_def}, we introduce the definition of $\HL$ and explain it following the concept of residuation; then, in Section \ref{ssec_NLM_in_HL}, we present an embedding of $\NLM$ in $\HL$. In Section \ref{sec_discussion_conclusion}, we give concluding remarks.


\section{Preliminaries}\label{sec_preliminaries}
The set $\Sigma^\ast$ is the set of strings over an alphabet $\Sigma$ including the empty word $\Lambda$. By $w(i)$ we denote the $i$-th symbol of $w \in \Sigma^\ast$; $|w|$ is the number of symbols in $w$. Hereinafter, a string can be represented either as $a_1\dotsc a_n$ or as $a_1,\dotsc,a_n$ (using commas as separators between symbols). Each function $f:\Sigma \to \Delta$ can be extended to a homomorphism $f:\Sigma^\ast \to \Delta^\ast$. $[n]$ denotes the set $\{1,2,\dotsc,n\}$, and $[0] \eqdef \emptyset$.

A \emph{ranked set} $M$ is the set $M$ along with a function $\rk:M \to \mathbb{N}$ called \emph{the rank function}. Given a ranked set of \emph{labels} $\Sigma$, a \emph{hypergraph} $G$ over $\Sigma$ is a tuple $\langle V_G, E_G, att_G, lab_G, ext_G \rangle$ where $V_G$ is a finite set of \emph{nodes}, $E_G$ is a finite set of \emph{hyperedges}, $att_G: E_G\to V_G^\ast$ assigns a string of \emph{attachment nodes} to each hyperedge, $lab_G: E_G \to \Sigma$ labels each hyperedge by some element of $\Sigma$ in such a way that $\rk(lab_G(e))=|att_G(e)|$ whenever $e\in E_G$, and $ext_G\in V_G^\ast$ is a string of \emph{external nodes}.  Hypergraphs are considered up to isomorphism \cite{DrewesKH97}. The set of all hypergraphs with labels from $\Sigma$ is denoted by $\mathcal{H}(\Sigma)$. The rank function $\rk$ is defined on $\mathcal{H}(\Sigma)$ as follows: $\rk_G(e)\eqdef|att_G(e)|$. Besides, let $\rk(G)\eqdef |ext_G|$ be the rank function on hypergraphs. Thus $\mathcal{H}(\Sigma)$ is also a ranked set.

In drawings of hypergraphs, small circles are nodes, and rectangles are hyperedges. To represent attachment nodes $att_G(e)$, we draw a line with number $i$ from $e$ to $v$, if $v$ is the $i$-th symbol of $att_G(e)$. The label $lab_G(e)$ is put on the hyperedge $e$. External nodes are represented by numbers in round brackets: if the $i$-th symbol of $ext_G$ is $v$, then we mark it as $(i)$. If a hyperedge has exactly two attachment nodes, then it is depicted by a labeled thick arrow that goes from the first attachment node to the second one.

The \emph{replacement of a hyperedge $e_0$ in $G$ ($e_0 \in E_G$) by a hypergraph $H$} (such that $rk(e_0)=rk(H)$) is the following hypergraph $K$ denoted as $G[e_0/H]$:
\begin{itemize}
	\item $V_K = (V_G \sqcup V_H)/\equiv$ where $\equiv$ is the smallest equivalence relation such that $att_G(e_0)(i)\equiv ext_H(i)$ for $i = 1,\dotsc, \rk(e_0)$;
	\item $E_K = \left(E_G \setminus \{e_0\}\right) \sqcup E_H$;
	\item $att_K(e) = [att_G(e)]_\equiv$ for $e \in E_G$; $att_K(e) = [att_H(e)]_\equiv$ for $e \in E_H$ (here $[v]_\equiv$ is the equivalence class of $v$, and $[v_1\dotsc v_n]_\equiv$ is defined as a homomorphic extension $[v_1]_\equiv\dotsc [v_n]_\equiv$);
	\item $lab_K(e) = lab_G(e)$ for $e \in E_G$; $att_K(e) = lab_H(e)$ for $e \in E_H$;
	\item $ext_K = [ext_G]_\equiv$.
\end{itemize}
If several hyperedges of a hypergraph are replaced by other hypergraphs, then the result does not depend on the order of the replacements; moreover the result is the same if replacements are done simultaneously \cite{DrewesKH97}. If $e_1,\dots,e_k$ are distinct hyperedges of a hypergraph $G$ and they are simultaneously replaced by hypergraphs $H_1,\dots,H_k$ resp., then the result is denoted by $G[e_1/H_1,\dots,e_k/H_k]$.

Let us now define some auxiliary notions. Given $H \in \mathcal{H}(\Sigma)$ and a function $f:E_H\to \Sigma$ such that $rk(e)=rk(f(e))$ for $e\in E_H$, \emph{a relabeling $f(H)$} is the hypergraph $f(H)=\langle V_H, E_H, att_H, f, ext_H\rangle$. If $H$ is a hypergraph and $e \in E_H$, then $H-e$ is obtained by removing $e$ from $E_H$ and restricting $att_H$ and $lab_H$. 

A \emph{handle} $a^\bullet$ is a hypergraph $\langle [n],[1],att,lab,1\dotsc n\rangle$ where $att(1)=1\dotsc n$ and $lab(1)=a$ ($a\in\Sigma$, $rk(a)=n$). The handle is in some sense a ``neutral'' hypergraph. In particular, if for some hypergraph $G$ and $e \in E_G$ the replacement $G[e/a^\bullet]$ is correct, then it is simply a relabeling of the hyperedge $e$ by $a$.

\emph{A string graph $\SG(w)$ induced by a string $w=a_1\dots a_n$} is the hypergraph with $V_{\SG(w)} = \{v_0,\dotsc,v_n\}$, $E_{\SG(w)} = \{s_1,\dotsc,s_n\}$, $att_{\SG(w)}(s_i)=v_{i-1}v_i$, $lab_{\SG(w)}(s_i)=a_i$, $ext_{\SG(w)}=v_0v_n$. This is the canonical way of representing a string as a hypergraph in the framework we work in; in particular, context-free grammars are translated into hyperedge replacement grammars using string graphs \cite{DrewesKH97}. Let us denote the graph $\SG(XY)$ where $X,Y$ are symbols as $Str$ (we will use this notation when we do not care about the actual labels of $Str$). The following property of $Str=\SG(XY)$ is important for further considerations:
\begin{proposition}\label{proposition_SG}
	If $\alpha_1$, $\alpha_2$ are strings, then $Str[s_1/\SG(\alpha_1),s_2/\SG(\alpha_2)] = \SG(\alpha_1 \alpha_2)$.
\end{proposition}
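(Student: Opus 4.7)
The plan is to unfold the definition of hyperedge replacement and check that the result is, on the nose, the string graph of the concatenation. Write $\alpha_1 = b_1 \dots b_m$ and $\alpha_2 = c_1 \dots c_n$, so $\SG(\alpha_1)$ has nodes $u_0,\dots,u_m$ with external string $u_0 u_m$ and hyperedges $r_i$ with $att(r_i) = u_{i-1}u_i$, $lab(r_i) = b_i$; similarly $\SG(\alpha_2)$ has nodes $w_0,\dots,w_n$ with external string $w_0 w_n$ and hyperedges $t_j$ with $att(t_j) = w_{j-1}w_j$, $lab(t_j) = c_j$. Recall $Str$ has nodes $v_0,v_1,v_2$, hyperedges $s_1,s_2$ with $att(s_1)=v_0v_1$, $att(s_2)=v_1v_2$, and $ext = v_0 v_2$.

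Next, I would apply the definition of $G[e_1/H_1,e_2/H_2]$ given in the preliminaries. Both replacements are rank-compatible because each $s_k$ has rank $2$ and each $\SG(\alpha_k)$ has rank $2$ (even when $\alpha_k = \Lambda$, in which case the two external nodes coincide). The equivalence $\equiv$ is generated by identifying attachment nodes of $s_1,s_2$ with the corresponding external nodes of the substituted graphs, i.e.\ $v_0 \equiv u_0$, $v_1 \equiv u_m$, $v_1 \equiv w_0$, $v_2 \equiv w_n$. Thus the only nontrivial identification merges $u_m$ with $w_0$; all other nodes remain distinct. Renaming the resulting classes as $x_0,\dots,x_{m+n}$ in the obvious chain order (with $x_i = [u_i]$ for $i \le m$ and $x_{m+j} = [w_j]$ for $j \ge 0$, and $x_m = [u_m] = [w_0]$), the remaining hyperedges are exactly the $r_i$'s and $t_j$'s with the inherited attachments $x_{i-1}x_i$ and $x_{m+j-1}x_{m+j}$ respectively, and the external string is $[v_0][v_2] = x_0 x_{m+n}$.

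Finally, I would compare this directly with $\SG(\alpha_1\alpha_2)$: its definition yields nodes $y_0,\dots,y_{m+n}$, a hyperedge for each of the $m+n$ letters of $\alpha_1\alpha_2$ attached to successive nodes and labeled by the corresponding letter of $b_1\dots b_m c_1 \dots c_n$, and external nodes $y_0 y_{m+n}$. The map $x_k \mapsto y_k$ together with the obvious bijection of hyperedges preserves labels, attachment strings, and external strings, hence is an isomorphism, and since hypergraphs are taken up to isomorphism we obtain the claimed equality.

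The only genuinely delicate point is the bookkeeping of the quotient $\equiv$ and, in particular, handling the edge cases $\alpha_1 = \Lambda$ or $\alpha_2 = \Lambda$, where $u_0 = u_m$ or $w_0 = w_n$ already inside the substituted graph; one must check that the equivalence still collapses things exactly as required (e.g.\ if $\alpha_1 = \Lambda$, then $v_0$ and $v_1$ are both identified with the single node of $\SG(\Lambda)$, matching $\SG(\alpha_2) = \SG(\alpha_1\alpha_2)$). Everything else is routine unfolding of definitions.
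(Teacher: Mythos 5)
Your proof is correct. The paper states this proposition without proof, treating it as a routine consequence of the definition of hyperedge replacement, and your argument is exactly the unfolding that is being left implicit: the quotient $\equiv$ identifies only $u_m$ with $w_0$ (via $v_1$), the surviving hyperedges inherit the chain attachments, and the external string becomes $x_0 x_{m+n}$, giving an isomorphism with $\SG(\alpha_1\alpha_2)$. Your attention to the degenerate cases $\alpha_k = \Lambda$, where $\SG(\Lambda)$ is a single node with the repeated external string and the replacement collapses $v_0$ and $v_1$ (or $v_1$ and $v_2$), is a genuine point the paper glosses over and is handled correctly.
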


\section{How The Hypergraph Lambek Calculus Models Modes and Modalities}\label{sec_main}

\subsection{Hypergraph Lambek Calculus as a General Theory of Residuation}\label{ssec_HL_def}

Our goal is to show that the hypergraph Lambek calculus $\HL$ studied in \cite{Pshenitsyn21,Pshenitsyn22,Pshenitsyn23} generalizes the multimodal Lambek calculus $\NLM$. In order to reach it, firstly, we would like to discuss how $\HL$ works. A detailed explanation of $\HL$ can be found in \cite{Pshenitsyn21}: there $\HL$ is motivated, on the one hand, by a conversion procedure of context-free grammars into Lambek categorial grammars, and, on the other hand, by generalizing the language semantics of $\LC$ to hypergraphs. In this section, we show that the definition of $\HL$ can also be approached if one is guided by the following postulates:
\begin{enumerate}
	\item\label{postulates_HL_0_goal} The logic $\HL$ that should play the role of a very general Lambek-style calculus deals with hypergraphs as with the most general finite discrete structures.
	\item\label{postulates_HL_1_plot} As a sequent calculus, $\HL$ must be defined by specifying the set of formulas $\mathcal{F}(\HL)$, the notion of a sequent and by introducing axioms and rules.
	\item\label{postulates_HL_2_type} There must be a generalized product operation in the inventory of operations of $\HL$, which unifies various ways of composition in existing Lambek-style calculi; there must also be a corresponding division operation, which is related to the generalized product by the residuation law. 
	\item\label{postulates_HL_3_sequent} A sequent of $\HL$ must be of the form $H \to C$ where $C \in \mathcal{F}(\HL)$ and $H \in \mathcal{H}(\mathcal{F}(\HL))$, i.e. where $H$ is a hypergraph labeled by formulas of the calculus.
	\item\label{postulates_HL_3.5_rules} Rules of $\HL$ must be formulated using the notion of hyperedge replacement.
	\item\label{postulates_HL_4_translation} There must exist a function $tr:\mathcal{F}(\LC) \to \mathcal{F}(\HL)$ such that a sequent $A_1,\dotsc,A_n \to B$ is derivable in $\LC$ if and only if $\SG(tr(A_1)\dotsc tr(A_n)) \to tr(B)$ is derivable in $\HL$. In other words, there must be a translation $tr$ of formulas that is extended to sequents as follows: an antecedent of a sequent of $\LC$ must be translated into a string graph in $\HL$. 
	\item\label{postulates_HL_5_translation_rules} Even more, if there is an instance of a rule of $\LC$ and all sequents contained in it are translated into those of $\HL$ in a way described in Postulate \ref{postulates_HL_4_translation}, then the result must be an instance of a rule of $\HL$.
\end{enumerate}
Postulate \ref{postulates_HL_2_type} is brought to life in the hypergraph Lambek calculus as follows: hypergraphs can be considered as mechanisms of composing, or concatenating structures. Look at Proposition \ref{proposition_SG}: there it is stated that the replacement of hyperedges $s_1$, $s_2$ in $Str$ by string graphs $\SG(\alpha_1)$, $\SG(\alpha_2)$ leads to a string graph corresponding to concatenation of $\alpha_1$ and $\alpha_2$. Consequently, the product $p \bullet q$ of $\LC$ can be represented by the string graph $\SG(pq)$. Generalizing this idea, we can say that, given types $p$ and $q$, a hypergraph $M$ with two hyperedges $m_1$, $m_2$ such that $lab_M(m_1) = p$ and $lab_M(m_2)=q$ is a composition of $p$ and $q$; the structure of $M$ is understood as the way of composing resources $p$ and $q$.  The author of \cite{Pshenitsyn21} introduces a symbol $\times$ and says that, if $M$ is any hypergraph labeled by formulas of $\HL$, then $\times(M)$ must also be a formula of $\HL$, which is understood as a composition of formulas that are labels of $M$ into a single structure $M$. In particular, the formula $p \bullet q$ of $\LC$ is translated into the formula $\times(\SG(pq))$ of $\HL$.

To come up with the notion of a generalized division, let us revise the residuation law following \cite{Moortgat96}. There, given an $n$-ary product $f_\bullet(A_1,\dotsc,A_n)$, one introduces the $i$-th place residual $f^i_\to(A_1,\dotsc,A_n)$; then the residuation law is as follows: $f_\bullet(A_1,\dotsc,A_n) \to B$ if and only if $A_i \to f^i_\to(A_1,\dotsc, A_{i-1},B,A_{i+1},\dotsc,A_n)$. In $\HL$, instead of $f_\bullet$ we have the formula $\times(M)$ as the product, and instead of ``the $i$-th place'' we have hyperedges of $M$ as places for formulas. Let $m \in E_M$ be a hyperedge with the label $lab_M(m)=A$; then we would like to say that $\times(M) \to C$ if and only if $A \to Res(C,M-m)$ where $Res$ is a hypothetical residual operation involving $C$ and $M$ without the distinguished hyperedge $m$. However, we cannot simply remove $m$: we have to remember the position where it is attached. To do this, one introduces a set of labels $\{\$_n\}_{n \in \mathbb{N}}$ ($\rk(\$_n)=n$), which can be understood as ``holes''; a hole label is designed to be placed on $m$ instead of $A$. Then, the residuation law can be formulated as follows: $\times(M) \to C$ if and only if $A \to Res(C,M^\prime)$ where $M^\prime = M[m/\$_n^\bullet]$ ($n=\rk(m)$) is a hypergraph with the hole instead of the hyperedge $m$. The function $Res(C,M^\prime)$ is denoted in \cite{Pshenitsyn23} as $C \div M^\prime$.

The rigorous definitions formalizing the above discussion are presented in \cite{Pshenitsyn23} in the following way:


\begin{definition}[\cite{Pshenitsyn23}]\label{definition_HL_formula}
	Let us fix a ranked set $\mathit{Pr}$ called \emph{the set of primitive formulas} such that for each $k \in\mathbb{N}$ there are infinitely many $p \in \mathit{Pr}$ satisfying $\rk(p) = k$. Besides, let us fix a countable set of labels $\$_n,n\in\mathbb{N}$ and set $\rk(\$_n)=n$; let us agree that these labels do not belong to any other set considered in the definition of the calculus. Then the ranked set of \emph{formulas} $\mathcal{F}(\HL)$ (called \emph{types} in \cite{Pshenitsyn23}) is the least set satisfying the following conditions:
	\begin{enumerate}
		\item All primitive formulas are formulas.
		\item Let $N \in \mathcal{F}(\HL)$ be a formula, and let $D$ be a hypergraph such that labels of all its hyperedges, except for one, are from $\mathcal{F}(\HL)$, while the remaining one is $\$_d$ for some $d$; let also $\rk(N)=\rk(D)$. Then $N\div D \in \mathcal{F}(\HL)$ and $\rk(N\div D) = d$. The hyperedge of $D$ labeled by $\$_d$ is denoted by $e^\$_D$.
		\item If $M$ is a hypergraph with labels from $\mathcal{F}(\HL)$, then $\times(M) \in \mathcal{F}(\HL)$ and $\rk(\times(M)) = \rk(M)$.
	\end{enumerate}
\end{definition} 
\begin{definition}[\cite{Pshenitsyn23}]\label{definition_HL_sequent}
	A \emph{sequent} is a structure of the form $H\to A$ where $H \in \mathcal{H}(\mathcal{F}(\HL))$ is the \emph{antecedent} of the sequent, and $A \in \mathcal{F}(\HL)$ is the \emph{succedent} of the sequent such that $\rk(H)=\rk(A)$.
\end{definition}

After defining formulas and sequents, let us look at how the Lambek calculus is translated into $\HL$ in \cite{Pshenitsyn22}. The translation function $tr:\mathcal{F}(\LC) \to \mathcal{F}(\HL)$ of $\LC$ in $\HL$ is defined in \cite[Section 3.3]{Pshenitsyn22} as follows:
\begin{multicols}{2}
	\begin{itemize}
		\item $tr(p)\eqdef p$ where $p\in Pr$, $\rk(p)=2$;
		\item $tr(A\bullet B)\eqdef \times(\SG(tr(A),tr(B)))$;
		\item $tr(A/B)\eqdef tr(A)\div \SG(\$_2,tr(B))$;
		\item $tr(B\backslash A)\eqdef tr(A)\div \SG(tr(B),\$_2)$.
	\end{itemize}
\end{multicols}
\noindent
Here the formula $p \bullet q$ is converted into the formula $tr(p \bullet q)$ with the string graph $\SG(pq)$, as expected. 

After defining the notion of formulas and sequents it remains to specify rules of the hypothetical calculus $\HL$, following Postulates \ref{postulates_HL_4_translation} and \ref{postulates_HL_5_translation_rules}. To do this, let us consider the rules of $\LC$ and translate antecedents of sequents participating in them into string graphs according to Postulate \ref{postulates_HL_4_translation}:
\begin{center}
	\begin{tabular}{ccc}
		\vspace{3mm}
		$
		\infer[(\bullet L)]
		{
			\Gamma, A\bullet B, \Delta \to C
		}
		{
			\Gamma, A, B, \Delta \to C
		}
		$
		&
		$\rightsquigarrow$
		&
		$
		\infer[]
		{
			\SG(tr(\Gamma), tr(A\bullet B), tr(\Delta)) \to tr(C)
		}
		{
			\SG(tr(\Gamma), tr(A), tr(B), tr(\Delta)) \to tr(C)
		}
		$
		\\
		$
		\infer[(\bullet R)]
		{
			\Pi_1, \Pi_2 \to A_1 \bullet A_2
		}
		{
			\Pi_1 \to A_1
			&
			\Pi_2 \to A_2
		}
		$
		&
		$\rightsquigarrow$
		&
		$
		\infer[]
		{
			\SG(tr(\Pi_1), tr(\Pi_2)) \to tr(A_1 \bullet A_2) 
		}
		{
			\SG(tr(\Pi_1)) \to tr(A_1)
			&
			\SG(tr(\Pi_2)) \to tr(A_2)
		}
		$
		\\
	\end{tabular}
\end{center}
\begin{example}
	Let $U_1 = tr(p \bullet q)=\times(\SG(p,q))$, $U_2 = tr(r \bullet s)=\times(\SG(r,s))$, $U_3 = tr((p\bullet q)\bullet (r\bullet s)) = \times(\SG(U_2,U_3))$. Below we present an application of $(\bullet R)$ and translate all its sequents into string graphs:
	$$
	\infer[(\bullet R)]
	{
		p,q,r,s \to (p\bullet q)\bullet (r\bullet s)
	}
	{
		p, q \to p \bullet q
		&
		r, s \to r \bullet s
	}
	\quad
	\rightsquigarrow
	\quad
	\infer[]{
		\vcenter{\hbox{{\tikz[baseline=.1ex]{
						\node[node,label=left:{\scriptsize $(1)$}] (N1) {};
						\node[node,right=8mm of N1] (N2) {};
						\node[node,right=8mm of N2] (N3) {};
						\node[node,right=8mm of N3] (N4) {};
						\node[node,right=8mm of N4,label=right:{\scriptsize $(2)$}] (N5) {};
						\draw[->,black] (N1) -- node[above] {$p$} (N2);
						\draw[->,black] (N2) -- node[above] {$q$} (N3);
						\draw[->,black] (N3) -- node[above] {$r$} (N4);
						\draw[->,black] (N4) -- node[above] {$s$} (N5);
		}}}}\to \times(\SG(U_2,U_3))
	}{
		\vcenter{\hbox{{\tikz[baseline=.1ex]{
						\node[node,label=left:{\scriptsize $(1)$}] (N1) {};
						\node[node,right=8mm of N1] (N2) {};
						\node[node,right=8mm of N2,label=right:{\scriptsize $(2)$}] (N3) {};
						\draw[->,black] (N1) -- node[above] {$p$} (N2);
						\draw[->,black] (N2) -- node[above] {$q$} (N3);
		}}}}\to U_2
		&
		\vcenter{\hbox{{\tikz[baseline=.1ex]{
						\node[node,label=left:{\scriptsize $(1)$}] (N1) {};
						\node[node,right=8mm of N1] (N2) {};
						\node[node,right=8mm of N2,label=right:{\scriptsize $(2)$}] (N3) {};
						\draw[->,black] (N1) -- node[above] {$r$} (N2);
						\draw[->,black] (N2) -- node[above] {$s$} (N3);
		}}}}\to U_3
	}
	$$
	The conclusion antecedent $\SG(pqrs)$ can be expressed in terms of the premise antecedents $\SG(pq)$ and $\SG(rs)$ and of the formula $\times(\SG(U_1, U_2))$ as follows: $\SG(pqrs) = \SG(U_1, U_2)[s_1/\SG(pq),s_2/\SG(rs)]$ (see Proposition \ref{proposition_SG}). In other words, we compose the antecedents of the premises using $M=\SG(U_1, U_2)$ as a way of composition and obtain the antecedent of the conclusion. This can be generalized as follows:
	$$
	\infer[(\times R)]
	{
		M[m_1/H_1,\dotsc,m_l/H_l]\to\times(M)
	}
	{
		H_1\to lab_M(m_1) & \dots & H_l\to lab_M(m_l)
	}
	$$
	In other words, if there are sequents with antecedents $H_1$, $\dotsc$, $H_l$ and with succedents being labels of $M$, then one composes $H_1,\dotsc,H_l$ ``according to $M$'' and obtains the sequent $M[m_1/H_1,\dotsc,m_l/H_l]\to\times(M)$ as the conclusion. And, indeed, the rule $(\times R)$ is one of the rules of $\HL$ as defined in \cite{Pshenitsyn23}.
\end{example}
\begin{example}
	Let $U_4 = tr(p \bullet (q \bullet r))$. Below we present a rule application of $(\bullet L)$ and translation of all its sequents into string graphs:
	$$
	\infer[(\bullet L)]
	{
		p\bullet q, r \to (p\bullet q)\bullet r
	}
	{
		p, q, r \to (p\bullet q)\bullet r
	}
	\qquad
	\rightsquigarrow
	\qquad
	\infer[]{
		\vcenter{\hbox{{\tikz[baseline=.1ex]{
					\node[node,label=left:{\scriptsize $(1)$}] (N2) {};
					\node[node,right=17.3mm of N2] (N4) {};
					\node[node,right=8mm of N4,label=right:{\scriptsize $(2)$}] (N5) {};
					\draw[->,black] (N2) -- node[above] {$\times(\SG(pq))$} (N4);
					\draw[->,black] (N4) -- node[above] {$r$} (N5);	
		}}}}\to U_4
	}{
		\vcenter{\hbox{{\tikz[baseline=.1ex]{
					\node[node,label=left:{\scriptsize $(1)$}] (N1) {};
					\node[node,right=8mm of N1] (N2) {};
					\node[node,right=8mm of N2] (N3) {};
					\node[node,right=8mm of N3,label=right:{\scriptsize $(2)$}] (N4) {};
					\draw[->,black] (N1) -- node[above] {$p$} (N2);
					\draw[->,black] (N2) -- node[above] {$q$} (N3);
					\draw[->,black] (N3) -- node[above] {$r$} (N4);
		}}}}\to U_4
	}
	$$
	The premise antecedent $\SG(p,q,r)$ can be expressed in terms of the conclusion antecedent $\SG(tr(p\bullet q), r)$ and the label $\times(\SG(p,q))$ of one of its hyperedges as follows: $\SG(p,q,r) = \SG(tr(p\bullet q), r)[s_1/\SG(p,q)]$. In other words, we replace the hyperedge of the conclusion antecedent with the label of the form $\times(M)$ by $M$ and obtain the antecedent of the premise. In general, this can be formulated as follows:
	$$
	\infer[(\times L)]
	{
		H[e/(\times(M))^\bullet]\to A
	}
	{
		H[e/M]\to A
	}
	$$
	This is the second rule for the product in $\HL$ as defined in \cite{Pshenitsyn23}.
\end{example}

The rules for the hypergraph division can be examined in the same way (unfortunately, we do not do this due to the lack of space). The resulting formal definition is given below according to \cite{Pshenitsyn23}.

\begin{definition}[\cite{Pshenitsyn23}]\label{definition_HL_rules}
	\emph{The axiom and rules of $\HL$} are as follows:
	$$
	\infer[(ax)]
	{
		A^\bullet\to A
	}
	{
	}
	\qquad
	\infer[(\times L)]
	{
		H[e/(\times(M))^\bullet]\to A
	}
	{
		H[e/M]\to A
	}
	\qquad
	\infer[(\times R)]
	{
		M[m_1/H_1,\dotsc,m_l/H_l]\to\times(M)
	}
	{
		H_1\to lab_M(m_1) & \dots & H_l\to lab_M(m_l)
	}
	$$
	$$
	\infer[(\div L)]
	{
		H\left[e/D[e^\$_D/ (N\div D)^\bullet,d_1/H_1,\dotsc,d_k/H_k]\right]\to A
	}{
		H[e/N^\bullet]\to A
		&
		H_1\to lab_D(d_1)&
		\dotsc
		&
		H_k\to lab_D(d_k)
	}
	\qquad\qquad
	\infer[(\div R)]
	{
		F\to N\div D
	}
	{
		D[e^\$_D/F]\to N
	}
	$$
	Here $A$, $N\div D$, $\times(M) \in \mathcal{F}(\HL)$ are formulas of $\HL$; hypergraphs $F,H, H_i \in \mathcal{H}(\mathcal{F}(\HL))$ are labeled by formulas; $e \in E_H$; $E_D=\{e^\$_D,d_1,\dots,d_k\}$, $E_M = \{m_1,\dotsc, m_l\}$. The formula $N \div D$ in the rules $(\div L)$, $(\div R)$ and the formula $\times(M)$ in the rules $(\times L)$, $(\times R)$ are called \emph{major}.
\end{definition}
Definitions \ref{definition_HL_formula}, \ref{definition_HL_sequent}, \ref{definition_HL_rules} constitute the complete description of the hypergraph Lambek calculus $\HL$. 

In \cite{Pshenitsyn22}, it is proved that the residuation law formulated in the following form holds for $\HL$:
\begin{equation}
	\HL \vdash F \to N \div D 
	\mbox{ if and only if }
	\HL \vdash D[e^\$_D/F]\to N \mbox{ for any $F \in \mathcal{H}(\mathcal{F}(\HL))$ and $A \in \mathcal{F}(\HL)$.}
\end{equation}
This justifies that $\HL$ is a pure logic of residuation (although ``pureness'' must be checked independently).

For further reasonings let us introduce some auxiliary notions related to $\HL$ (taken from \cite{Pshenitsyn22}).

\begin{definition}
	Let $\mathcal{F}$ be a subset of $\mathcal{F}(\HL)$. We say that $H\to A$ \emph{is over $\mathcal{F}$} if $G\in\mathcal{H}(\mathcal{F})$ and $A\in\mathcal{F}$. 
\end{definition}
\begin{definition}
	A formula $B$ is \emph{a subformula of $A$} if either $A=B$, or $A=N\div D$ and $B$ is a subformula of $N$ or is a subformula of one of labels of hyperedges from $E_D$, or $A=\times(M)$ and $B$ is a subformula of one of labels of hyperedges from $E_M$. 
\end{definition}
\begin{definition}
	A formula $A$ without subformulas of the form $\times(M)$ where $E_M = \emptyset$ is called \emph{skeleton-free}.
\end{definition}
\begin{definition}
	The \emph{head} of a formula is the set defined inductively as follows:
	\begin{enumerate}
		\item $\head(p) = \{p\}$ for $p \in \mathit{Pr}$;
		\item $\head(N \div D) = \head(N)$;
		\item $\head(\times(M)) = \bigcup_{i=1}^k \head(lab(m_i))$ where $E_M = \{m_1,\dotsc, m_k\}$.
	\end{enumerate}
\end{definition}
The following lemma is proved in \cite{Pshenitsyn22} by a simple induction:
\begin{lemma}\label{lemma_wolf}
	Let $\HL \vdash H \to p$ (where $p \in \mathit{Pr}$); it is given that for each $e \in E_H$ the formula $A=lab_H(e)$ is skeleton-free and that $\head(B)=\{p\}$ implies $B=p$ for each subformula $B$ of $A$. Then $H=p^\bullet$.
\end{lemma}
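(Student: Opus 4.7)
The plan is to induct on the derivation of $H \to p$ in $\HL$ and inspect the last rule applied. Since the succedent $p$ is primitive, the last rule cannot be $(\times R)$ or $(\div R)$, whose conclusions have succedents of the form $\times(M)$ and $N \div D$ respectively. If the last rule is $(ax)$, then $H = p^\bullet$ immediately. It therefore suffices to rule out the two left rules by deriving a contradiction from the hypothesis on labels of $H$.

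For $(\times L)$: there exist $K$, $e \in E_K$, and $M$ with $H = K[e/(\times(M))^\bullet]$ and premise $K[e/M] \to p$; thus $H$ contains a hyperedge labeled $\times(M)$, and by hypothesis $\times(M)$ is skeleton-free, so $E_M \neq \emptyset$. To apply the induction hypothesis to the premise, observe that every label of $K[e/M]$ is either an unchanged label of $H$ (for hyperedges of $K$ other than $e$) or a subformula of $\times(M)$ (for hyperedges of $M$); both skeleton-freeness and the implication ``$\head(B) = \{p\} \Rightarrow B = p$'' descend to subformulas, so the premise satisfies the assumptions of the lemma. By IH, $K[e/M] = p^\bullet$. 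Counting hyperedges yields $|E_K| - 1 + |E_M| = 1$, and since $|E_K|, |E_M| \ge 1$, both equal $1$; letting $E_M = \{m\}$, the sole hyperedge of the resulting handle must be labeled $p$, so $lab_M(m) = p$ and hence $\head(\times(M)) = \{p\}$. Applying the hypothesis to the label $\times(M)$ itself (as a subformula of itself) then forces $\times(M) = p$, contradicting the fact that $\times(M)$ is not primitive.

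The case $(\div L)$ runs analogously: $H = K[e/D[e^\$_D/(N \div D)^\bullet, d_1/G_1, \dotsc, d_k/G_k]]$ with one premise $K[e/N^\bullet] \to p$, so $H$ contains a hyperedge labeled $N \div D$. The labels of $K[e/N^\bullet]$ are either labels of $H$ or the single new label $N$, which is a subformula of $N \div D$; hence the premise satisfies the assumptions of the lemma. By IH, $K[e/N^\bullet] = p^\bullet$, forcing $|E_K| = 1$ and $N = p$. Consequently $\head(N \div D) = \head(N) = \{p\}$, and the hypothesis applied to $N \div D$ itself forces $N \div D = p$, again impossible.

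The only delicate point is bookkeeping: one must identify, for each left rule, which labels of the premise come unchanged from $H$ and which arise as subformulas of a distinguished label of $H$, and then verify that both conditions (skeleton-freeness and the head condition) descend to subformulas. Once this is observed, the case analysis on the last rule is routine.
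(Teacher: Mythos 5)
Your proof is correct and is exactly the ``simple induction'' that the paper attributes to the cited reference: induct on the derivation, use skeleton-freeness to force $E_M\neq\emptyset$ in the $(\times L)$ case, and use the head condition applied to $\times(M)$ resp.\ $N\div D$ themselves to rule out both left rules, leaving only the axiom. The bookkeeping you flag (labels of the premise antecedent are either labels of $H$ or subformulas of a distinguished label of $H$, and both hypotheses are hereditary under the subformula relation) is handled correctly.
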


\subsection{Modes and Modalities as Hypergraphs}\label{ssec_NLM_in_HL}

Keeping in mind the ideas behind the definition of $\HL$, we would like to invent a way of representing formulas and sequents of the multimodal Lambek calculus $\NLM$ in $\HL$. To recall, a sequent in $\NLM$ is of the form $\Pi \to A$ where $\Pi$ is a structured database. Essentially, the latter is just a structure with brackets, which can be viewed as a term or as a tree with unary and binary branches. A natural way to transform a term into a graph is by using tree graphs; cf. e.g. the work \cite{Moot08} where the relation between TAGs, hyperedge replacement grammars and $\NLM$-grammars is established, and the constructions involve hypertrees. A hypergraph $H$ is a \emph{hypertree} if every node $v$ in $V_H$ is an attachment node of exactly two hyperedges $e_1$ and $e_2$ such that $att_H(e_1)(1) = v$ and $att_H(e_2)(i)=v$ for $i>1$, except the root node, to which exactly one hyperedge is attached (say, $e_0$) and $att_H(e_0)(1)=v$ (see \cite{Moot08}). We will implicitly use this notion later.

Note that while there are additional structural rules for modes in $\NLM$ that belong to $\A$ or to $\C$, we do not want to add new structural rules to $\HL$ but rather to use rich expressivity of hypergraph structures. Thus, if, for example, $\I = \{\alpha,\beta\}$, $\A = \{\alpha\}$, $\C = \emptyset$, then we cannot simply represent antecedents of sequents of $\NLM$ as tree graphs because this would require adding some external rules saying that the tree corresponding to the structured database $((p,q)^\alpha,r)^\alpha$ is the same as that corresponding to $(p,(q,r)^\alpha)^\alpha$. Instead of doing this, we want to make associativity of the $\alpha$ mode a part of the antecedent structure using a certain kind of hypergraphs. The difficulty is that we must also keep the $\beta$ mode non-associative. Nevertheless, if $\A = \C = \emptyset$ (there are no associative or commutative modes), then the idea of representing structured databases as hypertrees looks promising. Let us explore it.

\subsubsection{Case 1: $\A=\emptyset$, $\C = \emptyset$, $\J = \emptyset$}\label{sssec_case_1}
Our goal is to define the embedding $\mu: \mathcal{F}(\NLM) \to \mathcal{F}(\HL)$ that would relate $\NLM$ and $\HL$ in the same way as $tr$ relates $\LC$ and $\HL$. In this case, all the products are non-associative and non-commutative.
\begin{example}\label{example_NL_sequents_translations}
	Let $x,y \in \I$. Consider two simple derivable sequents of $\NLM$:
	$((p,q)^x,r)^y \to (p \bullet_x q) \bullet_y r$ and $(p,(q,r)^y)^x \to p \bullet_x (q \bullet_y r)$.	Imagine that $\mu$ is defined; let $\mu(p) = p$ for $p \in \mathit{Pr}$. If we follow the idea of representing structured databases as hypertrees, then a natural translation of these sequents into those of $\HL$ must be as follows:
	$$
	\vcenter{\hbox{{\tikz[baseline=.1ex]{
					\node[node, label=right:{\scriptsize $(1)$}] (N1) {};
					\node[hyperedge, below=3.6mm of N1] (O) {$y$};
					\node[node, below left = 0.3mm and 5mm  of O] (N2) {};
					\node[node, below right = 0.3mm and 5mm  of O] (N3) {};
					\node[hyperedge, below=3.6mm of N2] (O2) {$x$};
					\node[node, below left = 0.3mm and 5mm  of O2] (N22) {};
					\node[node, below right = 0.3mm and 5mm  of O2] (N23) {};
					\node[hyperedge,below=3.6mm of N22] (E22) {$p$};
					\node[hyperedge,below=3.6mm of N23] (E23) {$q$};
					\node[hyperedge,below=3.6mm of N3] (E3) {$r$};
					\draw[-] (N1) -- node[left] {\scriptsize 1} (O);
					\draw[-] (O) -- node[above] {\scriptsize 2} (N2);
					\draw[-] (O) -- node[above] {\scriptsize 3} (N3);
					\draw[-] (N2) -- node[left] {\scriptsize 1} (O2);
					\draw[-] (O2) -- node[above] {\scriptsize 2} (N22);
					\draw[-] (O2) -- node[above] {\scriptsize 3} (N23);
					\draw[-] (N22) -- node[left] {\scriptsize 1} (E22);
					\draw[-] (N23) -- node[left] {\scriptsize 1} (E23);
					\draw[-] (N3) -- node[left] {\scriptsize 1} (E3);			
	}}}}
	\to \mu((p \bullet_x q) \bullet_y r)
	\qquad\qquad
	\vcenter{\hbox{{\tikz[baseline=.1ex]{
					\node[node, label=left:{\scriptsize $(1)$}] (N1) {};
					\node[hyperedge, below=3.6mm of N1] (O) {$x$};
					\node[node, below left = 0.3mm and 5mm  of O] (N2) {};
					\node[node, below right = 0.3mm and 5mm  of O] (N3) {};
					\node[hyperedge, below=3.6mm of N3] (O2) {$y$};
					\node[node, below left = 0.3mm and 5mm  of O2] (N22) {};
					\node[node, below right = 0.3mm and 5mm  of O2] (N23) {};
					\node[hyperedge,below=3.6mm of N22] (E22) {$q$};
					\node[hyperedge,below=3.6mm of N23] (E23) {$r$};
					\node[hyperedge,below=3.6mm of N2] (E3) {$p$};
					\draw[-] (N1) -- node[right] {\scriptsize 1} (O);
					\draw[-] (O) -- node[above] {\scriptsize 2} (N2);
					\draw[-] (O) -- node[above] {\scriptsize 3} (N3);
					\draw[-] (N3) -- node[right] {\scriptsize 1} (O2);
					\draw[-] (O2) -- node[above] {\scriptsize 2} (N22);
					\draw[-] (O2) -- node[above] {\scriptsize 3} (N23);
					\draw[-] (N22) -- node[right] {\scriptsize 1} (E22);
					\draw[-] (N23) -- node[right] {\scriptsize 1} (E23);
					\draw[-] (N2) -- node[right] {\scriptsize 1} (E3);			
	}}}}
	\to
	\mu(p \bullet_x (q \bullet_y r))
	$$
	Hyperedges of rank 3 represent modes; in particular, they are labeled by $x,y \in \I$. Since we work within the hypergraph Lambek calculus, $x,y$ must be formulas; thus, let us assume that modes are primitive formulas ($\I \subseteq \mathit{Pr}$ and $\rk(i)=3$ for $i \in \I$). In turn, formulas of $\NLM$ are translated into formulas of rank 1 that label leaves of hypertrees.
\end{example}
The above example shows us that the following hypergraph should be used to represent $p \bullet_i q$:
$$
R^i(p,q) \eqdef \vcenter{\hbox{{\tikz[baseline=.1ex]{
					\node[hyperedge] (O) {$i$};
					\node[node, above=3.6mm of O, label=right:{\scriptsize $(1)$}] (N1) {};
					\node[node, below left = 0.3mm and 5mm  of O] (N2) {};
					\node[node, below right = 0.3mm and 5mm  of O] (N3) {};
					\node[hyperedge,below=3.6mm of N2] (E2) {$p$};
					\node[hyperedge,below=3.6mm of N3] (E3) {$q$};
					\draw[-] (N1) -- node[left] {\scriptsize 1} (O);
					\draw[-] (N2) -- node[above] {\scriptsize 2} (O);
					\draw[-] (N3) -- node[above] {\scriptsize 3} (O);
					\draw[-] (N2) -- node[left] {\scriptsize 1} (E2);
					\draw[-] (N3) -- node[left] {\scriptsize 1} (E3);
	}}}}
$$
The hypergraph $R^i(p,q)$ is a way of composing resources $p,q$ of rank 1, so it plays the same role as the string graph $\SG(p,q)$ for $\LC$. It automatically gives rise to two residuals: $p \div R^i(q,\$_1)$ and $p \div R^i(\$_1,q)$; clearly, they correspond to $\backslash_i$ and $/_i$ respectively. Let $E_{R^i(A,B)} = \{r_0,r_1,r_2\}$ where $lab_{R^i(A,B)}(r_0) = i$, $lab_{R^i(A,B)}(r_1)=A$, $lab_{R^i(A,B)}(r_2)=B$. Note that when we consider the replacement $R^i(A,B)[r_1/H_1,r_2/H_2]$, the labels $A,B$ disappear and they do not affect the result of the replacement. In the cases when we do not care about these labels we denote $R^i(A,B)$ as $R^i$. Formally, the embedding is defined below.
\begin{definition}\label{definition_mu_1}
	Let $\I \subseteq \mathit{Pr}$, $\rk(i)=3$ for $i \in \I$. The embedding function $\mu$ is as follows:
	\begin{enumerate}
		\item $\mu(p) = p$ for $p \in \mathit{Pr}$, $\rk(p) = 1$;
		\item $\mu(A\bullet_i B) = \times\left(R^i\left(\mu(A),\mu(B)\right)\right)$;
		\item $\mu(B\backslash_i A) = \mu(A) \div R^i(\mu(B),\$_1)$;
		\item $\mu(A /_i B) = \mu(A) \div R^i(\$_1,\mu(B))$.
	\end{enumerate}
	Now, let us define the translation $\htree: \mathcal{T} \to \mathcal{H}(\mathcal{F}(\HL))$ of structured databases into hypertrees:
	\begin{enumerate}
		\item $\htree(A) = \mu(A)^\bullet$, $A \in \mathcal{F}(\NLM)$;
		\item $\htree((\Pi_1,\Pi_2)^i) = R^i[r_1/\htree(\Pi_1),r_2/\htree(\Pi_2)]$.
	\end{enumerate}
	 Finally, a sequent $\Pi \to A$ is transformed into the sequent $\htree(\Pi) \to \mu(A)$.
\end{definition}
\begin{example}
	The translation of the sequents of $\NLM$ presented in Example \ref{example_NL_sequents_translations} into those of $\HL$ according to Definition \ref{definition_mu_1} indeed gives the hypergraph sequents presented in that Example.
\end{example}

\subsubsection{Case 2: $\A=\emptyset$}\label{sssec_case_2}
The embedding $\mu$ constructed earlier can be naturally extended to capture non-associative but commutative modes $c \in \C$ as well as unary modalities $j \in \J$ by using other kinds of branches in hypergraphs. Let us do this using the following hypergraphs:
$$
K^c(A_1,A_2) \eqdef \vcenter{\hbox{{\tikz[baseline=.1ex]{
				\node[node, label=above:{\scriptsize $(1)$}] (N1) {};
				\node[node, below = 8mm  of N1] (N2) {};
				\node[hyperedge,below left = 4mm and 5mm of N2] (E1) {$A_1$};
				\node[hyperedge,below right = 4mm and 5mm of N2] (E2) {$A_2$};
				\draw[very thick,-latex] (N1) -- node[left] {$c$} (N2);
				\draw[-] (N2) -- node[below] {\scriptsize 1} (E1);
				\draw[-] (N2) -- node[below] {\scriptsize 1} (E2);
}}}}
\qquad\qquad
U^j(A) \eqdef \vcenter{\hbox{{\tikz[baseline=.1ex]{
				\node[node, label=above:{\scriptsize $(1)$}] (N1) {};
				\node[node, below = 8mm  of N1] (N2) {};
				\node[hyperedge,below = 4mm of N2] (E1) {$A$};
				\draw[very thick,-latex] (N1) -- node[left] {$j$} (N2);
				\draw[-] (N2) -- node[left] {\scriptsize 1} (E1);
}}}}
$$
Let $E_{K^c(A_1,A_2)} = \{k_0,k_1,k_2\}$ where $lab_{K^c(A_1,A_2)}(k_0) = i$, $lab_{K^c(A_1,A_2)}(k_t) = A_t$ (for $t=1,2$); let also $E_{U^j(A)} = \{u_0,u_1\}$ where $lab_{U^j(A)}(u_0) = j$, $lab_{U^j(A)}(u_1) = A$. Analogously to $P^i$, let us denote $K^c(a_1,a_2)$ as $K^c$ and $U^j(a_1)$ as $U^j$ when we do not care about labels $a_1,a_2$. Note that $K^c(A_1,A_2) = K^c(A_2,A_1)$; in general, $K^c[k_1/H_1,k_2/H_2] = K^c[k_1/H_2,k_2/H_1]$ for arbitrary hypergraphs $H_1$, $H_2$ of rank 1.

For $i \in \I \setminus \C$, the inductive definition of the function $\mu$ (Definition \ref{definition_mu_1}) must not be changed. It must only be extended as follows for the cases of $c \in \C$ and $j \in \J$:
\begin{definition}\label{definition_mu_2}
	Let $\C, \J \subseteq \mathit{Pr}$ and let $\rk(c)=\rk(j)=2$ for $c \in \C$, $j \in \J$. Then for $c \in \C$ and $j \in \J$ let:
	\begin{enumerate}
		\item $\mu(A\bullet_c B) = \times\left(K^c\left(\mu(A),\mu(B)\right)\right)$;
		\item $\mu(B\backslash_c A) = \mu(A /_c B) \eqdef \mu(A) \div K^c(\mu(B),\$_1)$;
		\item $\mu(\diam_j(A)) = \times(U^j(\mu(A)))$;
		\item $\mu(\square_j(A)) = \mu(A) \div U^j(\$_1)$.
	\end{enumerate}
	The definition of $\htree: \mathcal{T} \to \mathcal{H}(\mathcal{F}(\HL))$ is extended as well:
	\begin{enumerate}
		\item $\htree((\Pi_1,\Pi_2)^c) = K^c[k_1/\htree(\Pi_1),k_2/\htree(\Pi_2)]$;
		\item $\htree(\langle \Pi \rangle^j) = U^j[u_1/\htree(\Pi)]$.
	\end{enumerate}
\end{definition}

At this point we would like to legitimize the embedding we have constructed by formulating the embedding theorem. Let us denote the set $\{\mu(A) \mid A \in \mathcal{F}(\NLM)\}$ as $\mu(\mathcal{F}(\NLM))$.
\begin{theorem}\label{th_embed_NLM}
	Let $\NLM$ be the multimodal Lambek calculus with $\A = \emptyset$.
	\begin{enumerate}
		\item If a sequent $\Pi \to T$ is derivable in $\NLM$, then $\htree(\Pi) \to \mu(T)$ is derivable in $\HL$.
		\item If $G \to X$ is a sequent over $\mu(\mathcal{F}(\NLM))$ derivable in $\HL$, then for some structured database $\Pi$ and some formula $T$ we have $G = \htree(\Pi)$, $X = \mu(T)$, and $\NLM \vdash \Pi \to T$.
	\end{enumerate}
\end{theorem}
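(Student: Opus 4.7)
I proceed by straightforward induction on the $\NLM$ derivation of $\Pi \to T$. The axiom $A \to A$ translates to the $\HL$ axiom $\mu(A)^\bullet \to \mu(A)$. Each right-introduction rule $(\bullet_i R)$, $(\bullet_c R)$, $(\diam_j R)$ is mirrored by $(\times R)$ applied to the template $R^i$, $K^c$, or $U^j$; the extra premise that $\HL$ demands for the mode- or modality-labelled hyperedge is discharged by the $\HL$ axiom $i^\bullet \to i$ (respectively $c^\bullet \to c$, $j^\bullet \to j$), while the remaining premises come from the IH. Dually, $(\bullet_i L)$, $(\bullet_c L)$, $(\diam_j L)$ are mirrored by $(\times L)$; $(\backslash_i R)$, $(/_i R)$, $(\square_j R)$ by $(\div R)$; and $(\backslash_i L)$, $(/_i L)$, $(\square_j L)$ by $(\div L)$, again with the mode- or modality-labelled premise handled by an axiom. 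The commutativity postulate $(P)$ for $c \in \C$ requires no $\HL$ rule at all, since the observation from Section \ref{sssec_case_2} that $K^c[k_1/H_1, k_2/H_2] = K^c[k_1/H_2, k_2/H_1]$ makes $\htree((\Pi_1, \Pi_2)^c)$ and $\htree((\Pi_2, \Pi_1)^c)$ literally the same hypergraph.

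\textbf{Part 2 (completeness).} I induct on the $\HL$ derivation, using a slightly strengthened statement because intermediate sequents in an $\HL$ derivation may carry mode or modality labels. Every subformula of a $\mu$-translation is either another $\mu$-translation or a primitive in $\I \cup \C \cup \J$, and a direct inspection of Definition \ref{definition_HL_rules} shows that $\HL$ enjoys the subformula property, so it suffices to prove: every derivable $\HL$ sequent $G \to X$ whose labels lie in $\mu(\mathcal{F}(\NLM)) \cup \I \cup \C \cup \J$ and whose antecedent is a hypertree built from the templates $R^i$, $K^c$, $U^j$ with $\mu$-translated formulas at the leaves has the form $\htree(\Pi) \to \mu(T)$ for some $\Pi$, $T$ with $\NLM \vdash \Pi \to T$. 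I then case-split on the last rule used. For $(\times R)$, the major $\times(M)$ being a $\mu$-translation forces $M$ to be one of the three templates; the premise corresponding to the mode or modality hyperedge has the form $H_0 \to p$ with $p$ a rank-$2$ or rank-$3$ primitive, and Lemma \ref{lemma_wolf} forces $H_0 = p^\bullet$ (its hypotheses hold because $\mu$ never introduces $\times(\emptyset)$, and because every $\mu$-translation has rank $1$ and hence cannot itself be a higher-rank primitive). The remaining premises yield $\NLM$ derivations by IH, which combine under $(\bullet_i R)$, $(\bullet_c R)$, or $(\diam_j R)$; for $c \in \C$ the symmetry of $K^c$ accommodates either ordering of the premises via $(P)$. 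The cases $(\times L)$, $(\div R)$, $(\div L)$ are dual: the $\mu$-translated shape of the major formula pins down $M$ or $D$ to a fixed template, the hypergraph operation of the $\HL$ rule reduces to substitution inside a hypertree, and the IH delivers the matching $\NLM$ rule.

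\textbf{Main obstacle.} The principal bookkeeping challenge is the $(\div L)$ case: I must verify that the composite replacement $H\bigl[e / D[e^\$_D/(N\div D)^\bullet, d_1/H_1, \ldots, d_k/H_k]\bigr]$, once $D$ is replaced by its fixed template and the mode- or modality-labelled premise is trivialised via Lemma \ref{lemma_wolf}, coincides with $\htree(\Delta[(\Pi_1, B \backslash_i A_0)^i])$ or its commutative or modal analogue. This is not conceptually deep but requires careful tracking of which hyperedge of the template plays the role of $e^\$_D$ versus the mode or modality hyperedge, together with the observation that a leaf of a hypertree can always be expanded into a single one-step subtree in a unique way. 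The checking of the head condition needed to invoke Lemma \ref{lemma_wolf} is the other place where a small calculation arises, but it is handled by the rank-count argument mentioned above.
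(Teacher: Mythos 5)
Your Part 1 matches the paper's argument essentially verbatim: rule-by-rule translation, the mode/modality premises of $(\times R)$ and $(\div L)$ discharged by the axioms $i^\bullet \to i$, and $(P)$ absorbed by the symmetry $K^c[k_1/H_1,k_2/H_2]=K^c[k_1/H_2,k_2/H_1]$. Your Part 2 also follows the paper's route (induction on the $\HL$ derivation, case split on the last rule, Lemma \ref{lemma_wolf} for the premises $H_0 \to i$), but there is one genuine gap: your strengthened induction statement takes ``the antecedent is a hypertree built from the templates'' as a \emph{hypothesis}, whereas the theorem asserts $G=\htree(\Pi)$ as part of the \emph{conclusion}. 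Part of the content of the second statement is precisely that a derivable sequent over $\mu(\mathcal{F}(\NLM))$ cannot have an antecedent of any other shape (disconnected, non-tree, a leaf not reachable from the external node, etc.); your formulation is silent about such $G$, so it does not imply the theorem as stated. The repair is to run the induction the way the paper does: the hypertree shape is \emph{derived} at each step --- the induction hypothesis applied to the premises $H \to \mu(T)$ and $H_2 \to \mu(B)$ yields $H=\htree(\Psi)$ and $H_2=\htree(\Pi)$, Lemma \ref{lemma_wolf} forces $H_0=i^\bullet$, and one then computes that the conclusion antecedent equals $\htree(\Delta[(A/_iB,\Pi)^i])$; dually, in the $(\div R)$ case one starts from the premise antecedent $D[e^\$_D/F]=\htree(\Psi)$ and extracts $F$ as a sub-hypertree. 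Your ``leaf can be expanded into a one-step subtree in a unique way'' observation is exactly the decomposition fact needed there, so the fix costs you nothing new.

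A smaller imprecision: to invoke Lemma \ref{lemma_wolf} for $H_0 \to i$ you must check that $\head(B)=\{i\}$ implies $B=i$ for every subformula $B$ occurring in $H_0$. Your justification (``every $\mu$-translation has rank $1$ and hence cannot itself be a higher-rank primitive'') only shows $\mu(C)\neq i$, which is the wrong direction --- if some $\mu(C)\neq i$ had $\head(\mu(C))=\{i\}$ the condition would \emph{fail}. What is actually needed, and what the paper uses, is that $\head(\mu(C))$ always contains a primitive formula of rank $1$ (by an easy induction on $C$), hence can never equal the singleton $\{i\}$ for a mode or modality label $i$ of rank $2$ or $3$.
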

\begin{proof}[Proof sketch]
	The first statement is proved by induction on the size of a derivation. The axiom case is as follows: $T \to T$ is translated into $\htree(T) \to \mu(T)$ equal to $\mu(T)^\bullet \to \mu(T)$, which is an axiom of $\HL$. 
	
	To prove the induction step, it suffices to observe that the translations $\htree$ and $\mu$ transform each rule of $\NLM$ into a corresponding rule of $\HL$. For example, let us check this for the rule $(\bullet_i R)$ where $i \not\in \C$:
	$$
	\infer[(\bullet_i R)]
	{
		(\Pi_1, \Pi_2)^i \to A_1 \bullet_i A_2
	}
	{
		\Pi_1 \to A_1
		&
		\Pi_2 \to A_2
	}
	\qquad
	\rightsquigarrow
	\qquad
	\infer[]
	{
		\htree((\Pi_1, \Pi_2)^i) \to \mu(A_1 \bullet_i A_2)
	}
	{
		\htree(\Pi_1) \to \mu(A_1)
		&
		\htree(\Pi_2) \to \mu(A_2)
	}
	$$
	Note that
	\begin{itemize}
		\item $\mu(A_1 \bullet_i A_2) = \times(R^i(\mu(A_1),\mu(A_2)))$;
		\item $\htree((\Pi_1,\Pi_2)^i) = R^i(\mu(A_1),\mu(A_2))[r_1/\htree(\Pi_1),r_2/\htree(\Pi_2)]$.
	\end{itemize} 
	Hence the sequent $\htree((\Pi_1, \Pi_2)^i) \to \mu(A_1 \bullet_i A_2)$ is indeed obtained from the premises $\htree(\Pi_1) \to \mu(A_1)$ and $\htree(\Pi_2) \to \mu(A_2)$ according to the rule $(\times R)$ (consult Definition \ref{definition_HL_rules}).
	
	Let us do the same with the rule $(\square_j L)$. Given the rule as in Section \ref{sec_introduction}, note that there is a hyperedge $e$ of rank 1 in $\htree(\Delta[A])$ labeled by $N = \mu(A)$ that corresponds to the distinguished occurrence of $A$ in $\Delta[A]$. By the definition of $\mu$, $\mu(\square_j A) = \mu(A) \div U^j(\$_1) = N \div D$ where $D = U^j(\$_1)$, $E_D = \{u_0,u_1\}$, $lab_D(u_0) = j$, $lab_D(u_1)=\$_1$ (note that $e^\$_D = u_1$). Then the following is a correct application of $(\div L)$:
	$$
	\infer[(\div L)]
	{
		\htree(\Delta[A])\left[e/D[u_1/(N \div D)^\bullet, u_0/j^\bullet]\right] \to \mu(C)
	}
	{
		\htree(\Delta[A]) \to \mu(C)
		&
		\qquad
		&
		j^\bullet \to j
	}
	$$
	Finally, 
	$
		\htree(\Delta[A])\left[e/D[u_1/(N \div D)^\bullet, u_0/j^\bullet]\right] = 
		\htree(\Delta[A])\left[e/U^j(\mu(\square_j A))]\right]  =
		\htree(\Delta[\langle \square_j A \rangle^j])
	$. 
	Thus the application of the rule $(\square_j L)$ is translated into a correct application of the rule $(\div R)$.
	
	Let us also check the rule $(\square_j R)$. It suffices to notice that $\htree(\langle  \Pi \rangle^j) = U^j(\$_1)[u_1/\htree(\Pi)]$; this implies that the following is a correct rule application of $(\div R)$:
	$$
	\infer[(\div R)]
	{
		\htree(\Pi) \to \mu(\square_j C)
	}
	{
		\htree(\langle  \Pi \rangle^j) \to \mu(C)
	}
	$$
	The same can be done with every rule of $\NLM$. Finally, note that if we translate the premise and the conclusion in the rule $(P)$ for $c \in \C$ into sequents of $\HL$, then the premise coincides with the conclusion since $\htree((\Pi_1,\Pi_2)^c) = K^c[k_1/\htree(\Pi_1),k_2/\htree(\Pi_2)] = \htree((\Pi_2,\Pi_1)^c)$. Summing up, given a derivation of $\Pi \to A$ in $\NLM$, one can translate each sequent $\Delta \to B$ in it into $\htree(\Delta) \to \mu(B)$ and obtain a correct derivation in $\HL$ with the rule applications of $(P)$ becoming tautological.
	
	To prove the second statement, firstly, note that $X = \mu(T)$ since $G \to X$ is over $\mu(\mathcal{F}(\NLM))$. Now, let us prove the statement by induction on the length of a derivation of $G \to X$. The induction base is when $G \to X$ is an axiom; then $G = \mu(T)^\bullet = \htree(X)$, q.e.d. To prove the induction step let us consider the last rule applied in the derivation of $G \to X$.
	
	\textit{Case $(\div L)$.} Let the last rule applied be $(\div L)$. Assume that the major formula in this rule is $N \div D = \mu(A/_i B) = \mu(A) \div R^i(\$_1,\mu(B))$ (here $i \not\in\C$). Then the rule application must be of the form
	$$
	\infer[(\div L)]
	{
		H\left[e/R^i(\$_1,\mu(B))[r_1/(\mu(A /_i B))^\bullet, r_0/H_0, r_2/H_2]\right] 
		\to \mu(T)
	}{
		H \to \mu(T)
		&
		H_0 \to i
		&
		H_2 \to \mu(B)
	}
	$$
	Here $e \in E_H$ and $lab_H(e) = \mu(A)$. By the induction hypothesis, $H = \htree(\Psi)$, $H_2 = \htree(\Pi)$ and \\$\NLM \vdash \Psi \to T$, $\NLM \vdash \Pi \to B$. Note that $\Psi$ has an occurrence of $A$ corresponding to the formula $\mu(A)$ labeling $e$; let us distinguish this occurrence by denoting $\Psi$ as $\Delta[A]$. 
	
	The problem is in the sequent $H_0 \to i$: here the induction hypothesis cannot be applied since $i$ is not the image of $\mu$ but is a primitive formula representing the mode $i$. Instead, let us apply Lemma \ref{lemma_wolf}. We can do this since any subformula of a formula from $\mu(\mathcal{F}(\NLM))$ either is of the form $\mu(C)$ or it belongs to $\I \cup \J$; these formulas are skeleton-free and the heads of formulas from $\mu(\mathcal{F}(\NLM))$ cannot contain $i$ alone (only along with some primitive formulas of rank 1); thus $\head(B)=i$ implies $B=i$ for $B$ of interest. Consequently, according to Lemma \ref{lemma_wolf} $H_0 = i^\bullet$. Finally, 
	\begin{eqnarray*}
		H\left[e/R^i(\$_1,\mu(B))[r_1/(\mu(A /_i B))^\bullet, r_0/H_0, r_2/H_2]\right]
		= 
		H\left[e/R^i(\mu(A /_i B),\mu(B))[r_2/\htree(\Pi)]\right]
		=\\=
		\htree(\Delta[(A/_i B,\Pi)^i])
	\end{eqnarray*}
	The cases where $N \div D$ equals $\mu(B \backslash_i A)$, or $\mu(A/_c B)$ for $c \in \C$, or $\mu(\square_j A)$ are dealt with similarly.
	
	\textit{Case $(\div R)$.} Let the last rule applied be $(\div R)$. Assume that the major formula is $X = N \div D = \mu(T) = \mu(A/_i B) = \mu(A) \div R^i(\$_1,\mu(B))$ (where $i \not\in\C$). Then the rule application must be of the form
	$$
	\infer[(\div R)]
	{
		F\to \mu(A) \div R^i(\$_1,\mu(B))
	}
	{
		R^i(\$_1,\mu(B))[r_1/F] \to \mu(A)
	}
	$$
	By the induction hypothesis, $R^i(\$_1,\mu(B))[r_1/F] = \htree(\Psi)$ and $\NLM \vdash \Psi \to A$. The hypergraph $H = R^i(\$_1,\mu(B))[r_1/F]$ looks as follows: it has one external node, which is also the first attachment node of an $i$-labeled hyperedge, which has two daughters; the rightmost one goes to the leaf labeled by $\mu(B)$, while the leftmost one goes to the subhypergraph $F$ that replaces $r_1$. We know that the whole hypergraph $H$ is of the form $\htree(\Psi)$; then the subhypergraph $F$ must also be of this form according to the inductive definition of $\htree$. Hence $F = \htree(\Pi)$, and $H = \htree((\Pi,B)^i)$. Finally, note that $\Pi \to A/_i B$ is obtained from $(\Pi,B)^i \to A$ according to $(/_i R)$. 
	
	The remaining cases for the rules $(\div R)$, $(\times L)$ and $(\times R)$ are dealt with similarly.
\end{proof}
The proof of Theorem \ref{th_embed_NLM} shows us that the rules of $\NLM$ are translated into the instances of the rules of $\HL$, and, conversely, if there is a rule application of $\HL$ with the conclusion of the form $\htree(\Pi) \to \mu(T)$, then it must be a translation of a rule application of $\NLM$. Therefore, the embedding of $\NLM$ in $\HL$ can be considered as strong. Note, however, that in $\HL$ additional axioms of the form $i^\bullet \to i$ appear for $i \in \I \cup \J$; they are required since modes and modalities are treated as primitive formulas in $\HL$. 

\begin{example}
	Below several examples of translations of formulas and structured databases according to $\mu$ are presented (where $c \in \C$; $i, x, y \in \I \setminus \C$; $j \in \J$):
	\begin{center}
		\begin{tabular}{|c|c|c|}
		\hline
		$
		\mu(\square_j\diam_j p) = 
		$
		&
		$
		\mu((q \backslash_x p)/_y r) = 
		$
		&
		$
		\htree\left( ((p,q)^c,\langle r \rangle^j)^i \right) =
		$
		\\
		\hline
		$
		\times
		\left(
		\vcenter{\hbox{{\tikz[baseline=.1ex]{
						\node[node, label=above:{\scriptsize $(1)$}] (N1) {};
						\node[node, below = 8mm  of N1] (N2) {};
						\node[hyperedge,below = 4mm of N2] (E1) {$p$};
						\draw[very thick,-latex] (N1) -- node[left] {$j$} (N2);
						\draw[-] (N2) -- node[left] {\scriptsize 1} (E1);
		}}}}
		\right)
		\div
		\left(
		\vcenter{\hbox{{\tikz[baseline=.1ex]{
						\node[node, label=above:{\scriptsize $(1)$}] (N1) {};
						\node[node, below = 8mm  of N1] (N2) {};
						\node[hyperedge,below = 4mm of N2] (E1) {$\$_1$};
						\draw[very thick,-latex] (N1) -- node[left] {$j$} (N2);
						\draw[-] (N2) -- node[left] {\scriptsize 1} (E1);
		}}}}
		\right)
		$
		&
		$
		\left(
		p \div 
		\left(
		\vcenter{\hbox{{\tikz[baseline=.1ex]{
						\node[hyperedge] (O) {$x$};
						\node[node, above=3.6mm of O, label=above:{\scriptsize $(1)$}] (N1) {};
						\node[node, below left = 0.3mm and 3mm  of O] (N2) {};
						\node[node, below right = 0.3mm and 3mm  of O] (N3) {};
						\node[hyperedge,below=3.6mm of N2] (E2) {$q$};
						\node[hyperedge,below=3.6mm of N3] (E3) {$\$_1$};
						\draw[-] (N1) -- node[left] {\scriptsize 1} (O);
						\draw[-] (N2) -- node[above] {\scriptsize 2} (O);
						\draw[-] (N3) -- node[above] {\scriptsize 3} (O);
						\draw[-] (N2) -- node[left] {\scriptsize 1} (E2);
						\draw[-] (N3) -- node[left] {\scriptsize 1} (E3);
		}}}}
		\right)
		\right)
		\div 
		\left(
		\vcenter{\hbox{{\tikz[baseline=.1ex]{
						\node[hyperedge] (O) {$y$};
						\node[node, above=3.6mm of O, label=above:{\scriptsize $(1)$}] (N1) {};
						\node[node, below left = 0.3mm and 3mm  of O] (N2) {};
						\node[node, below right = 0.3mm and 3mm  of O] (N3) {};
						\node[hyperedge,below=3.6mm of N2] (E2) {$\$_1$};
						\node[hyperedge,below=3.6mm of N3] (E3) {$r$};
						\draw[-] (N1) -- node[left] {\scriptsize 1} (O);
						\draw[-] (N2) -- node[above] {\scriptsize 2} (O);
						\draw[-] (N3) -- node[above] {\scriptsize 3} (O);
						\draw[-] (N2) -- node[left] {\scriptsize 1} (E2);
						\draw[-] (N3) -- node[left] {\scriptsize 1} (E3);
		}}}}
		\right)
		$
		&
		$
		\vcenter{\hbox{{\tikz[baseline=.1ex]{
						\node[node, label=right:{\scriptsize $(1)$}] (N1) {};
						\node[hyperedge, below=3.6mm of N1] (O) {$i$};
						\node[node, below left = 0.3mm and 5mm  of O] (N2) {};
						\node[node, below right = 0.3mm and 5mm  of O] (N3) {};
						\node[node, below = 6.6mm of N2] (N22) {};
						\node[node, below = 6.6mm of N3] (N31) {};
						\node[hyperedge,below=3.6mm of N31] (E31) {$r$};
						\node[hyperedge,below left = 4mm and 3mm  of N22] (E22) {$p$};
						\node[hyperedge,below right = 4mm and 3mm  of N22] (E23) {$q$};
						\draw[very thick,-latex] (N2) -- node[left] {$c$} (N22);
						\draw[very thick,-latex] (N3) -- node[left] {$j$} (N31);
						\draw[-] (N1) -- node[left] {\scriptsize 1} (O);
						\draw[-] (O) -- node[above] {\scriptsize 2} (N2);
						\draw[-] (O) -- node[above] {\scriptsize 3} (N3);
						\draw[-] (N22) -- node[left] {\scriptsize 1} (E22);
						\draw[-] (N22) -- node[right] {\scriptsize 1} (E23);
						\draw[-] (N31) -- node[left] {\scriptsize 1} (E31);
		}}}}
		$
		\\[-11pt]
		&&\\
		\hline
		\end{tabular}
	\end{center}
\end{example}

\begin{example}
	Below the derivations of the sequents $p \to \square_j \diam_j p$ and $q \bullet_c (p \mathbin{/_c} q) \to p$ are presented in the first row (where $j \in \J$, $c \in \C$). In the second row, the derivations of their translations in $\HL$ is presented.
	\begin{center}
	\begin{tabular}{|c|c|}
	\hline
	&\\[-9pt]
	$
	\infer[(\square_j R)]
	{
		p \to \square_j \diam_j p
	}
	{
		\infer[(\diam_j R)]
		{
			\langle p \rangle^j \to \diam_j p
		}
		{
			p \to p
		}
	}
	$
	&
	$
	\infer[(\bullet_c L)]
	{
		q \bullet_c (p\mathbin{/_c} q) \to p
	}
	{
		\infer[(P)]
		{
			(q,p\mathbin{/_c} q)^c \to p
		}
		{
			\infer[(\mathop{/_c} L)]
			{
				(p\mathbin{/_c} q, q)^c \to p
			}
			{
				p \to p
				&
				q \to q
			}
		}
	}
	$
	\\[3pt]
	\hline
	&\\[-9pt]
	$\qquad
	\infer[(\div R)]
	{
		\vcenter{\hbox{{\tikz[baseline=.1ex]{
						\node[node, label=above:{\scriptsize $(1)$}] (N1) {};
						\node[hyperedge, below=4mm of N1] (O) {$p$};
						\draw[-] (N1) -- node[left] {\scriptsize 1} (O);
		}}}}
		\to \mu(\square_j \diam_j p)
	}
	{
		\infer[(\times R)]
		{
			\vcenter{\hbox{{\tikz[baseline=.1ex]{
							\node[node, label=above:{\scriptsize $(1)$}] (N1) {};
							\node[node, below = 7mm  of N1] (N2) {};
							\node[hyperedge,below = 4mm of N2] (E1) {$p$};
							\draw[very thick,-latex] (N1) -- node[left] {$j$} (N2);
							\draw[-] (N2) -- node[left] {\scriptsize 1} (E1);
			}}}}
			\to
			\times\left(
			\vcenter{\hbox{{\tikz[baseline=.1ex]{
							\node[node, label=above:{\scriptsize $(1)$}] (N1) {};
							\node[node, below = 7mm  of N1] (N2) {};
							\node[hyperedge,below = 4mm of N2] (E1) {$p$};
							\draw[very thick,-latex] (N1) -- node[left] {$j$} (N2);
							\draw[-] (N2) -- node[left] {\scriptsize 1} (E1);
			}}}}
			\right)
		}
		{
			p^\bullet \to p
			&
			j^\bullet \to j
		}
	}
	\qquad
	$
	&
	$\qquad
	\infer[(\times L)]
	{
		\vcenter{\hbox{{\tikz[baseline=.1ex]{
						\node[node, label=above:{\scriptsize $(1)$}] (N1) {};
						\node[hyperedge, below=4mm of N1] (O) {$\mu(T)$};
						\draw[-] (N1) -- node[left] {\scriptsize 1} (O);
		}}}}
		\to p
	}
	{
		\infer[(\div L)]
		{
			\vcenter{\hbox{{\tikz[baseline=.1ex]{
							\node[node, label=above:{\scriptsize $(1)$}] (N1) {};
							\node[node, below = 7mm  of N1] (N2) {};
							\node[hyperedge,below left = 4mm and 5mm of N2] (E1) {$q$};
							\node[hyperedge,below right = 4mm and 2mm of N2] (E2) {$\,\mu(p\,/\!_c\, q)\,$};
							\draw[very thick,-latex] (N1) -- node[left] {$c$} (N2);
							\draw[-] (N2) -- node[left] {\scriptsize 1} (E1);
							\draw[-] (N2) -- node[right] {\scriptsize 1} (E2);
			}}}}
			\to p
		}
		{
			p^\bullet \to p
			&
			q^\bullet \to q
			&
			c^\bullet \to c
		}
	}
	\qquad
	$
	\\[-9pt]
	&\\
	\hline
	\end{tabular}
	\end{center}
\end{example}

\subsubsection{Case 3: $\A = \{O\}$, $O \in \C$}\label{sssec_case_3}
Now, consider $\NLM$ with one associative mode $O$, which is also commutative (there can be arbitrary many non-associative modes and modalities as well). Let $K^O(A_1,A_2)$ be the following hypergraph:
$$
K^O(A_1,A_2) \eqdef 
\vcenter{\hbox{{\tikz[baseline=.1ex]{
				\node[node] (N1) {};
				\node[node, below = 6.6mm  of N1, label=below:{\scriptsize $(1)$}] (N2) {};
				\node[hyperedge,below left = 4mm and 7mm of N2] (E1) {$A_1$};
				\node[hyperedge,below right = 4mm and 7mm of N2] (E2) {$A_2$};
				\draw[very thick,-latex] (N1) -- node[left] {$O$} (N2);
				\draw[-] (N2) -- node[left] {\scriptsize 1} (E1);
				\draw[-] (N2) -- node[right] {\scriptsize 1} (E2);
}}}}
$$
Let us extend the inductive definition of $\mu$ to the case of the new mode as follows:
\begin{enumerate}
	\item $\mu(A\bullet_O B) = \times\left(K^O\left(\mu(A),\mu(B)\right)\right)$;
	\item $\mu(B\backslash_O A) = \mu(A /_O B) = \mu(A) \div K^O(\$_1,\mu(B))$.
\end{enumerate}

The inductive definition of $\mu$ is not changed for the remaining modalities as well as for modes. The definition of $\htree$ is also extended in a usual way. Correctness of the translation can be proved in the same way as in the previous cases.

\subsubsection{Case 4: $\A = \{O\}$, $O \not\in \C$}\label{sssec_case_4}
This is the trickiest case. Apparently, the way of composition of two hypergraphs $H_1$, $H_2$ in an associative but a non-commutative way can be done only in a string-like way, so we have to have ``the beginning'' and ``the end'' of $H_1$ and $H_2$ in order to connect ``the end'' of $H_1$ to ``the beginning'' of $H_2$. This implies that $H_1$, $H_2$ should have rank 2 rather than rank 1. Thus we have to change the definition of $\mu$ in all the cases. Let us define the following hypergraphs (where $O \in \A$, $c \in \C$, $i \in \I \setminus (\A \cup \C)$, $j \in \J$):
\begin{center}
	\begin{tabular}{cccc}
		$
		\bar{R}^i(A,B) = \qquad
		$
		&
		$
		\bar{K}^c(A,B) = \qquad
		$
		&
		$
		\bar{U}^j(A) = \qquad
		$
		&
		$
		\bar{R}^O(A,B) = \qquad
		$
		\\
		$
		\vcenter{\hbox{{\tikz[baseline=.1ex]{
						\node[hyperedgewide] (O) {$i$};
						\node[node, left=3mm of O, label=left:{\scriptsize $(1)$}] (N1l) {};
						\node[node, right=3mm of O, label=right:{\scriptsize $(2)$}] (N1r) {};
						\node[node, below left = 7mm and 7mm  of O] (N2l) {};
						\node[node, below left = 7mm and -2mm  of O] (N2r) {};
						\node[node, below right = 7mm and -2mm  of O] (N3l) {};
						\node[node, below right = 7mm and 7mm  of O] (N3r) {};
						\draw[very thick,-latex] (N2l) -- node[below] {$A$} (N2r);
						\draw[very thick,-latex] (N3l) -- node[below] {$B$} (N3r);
						\draw[-] (N1l) -- node[above] {\scriptsize 1} (O);
						\draw[-] (N1r) -- node[above] {\scriptsize 2} (O);
						\draw[-] (O) -- node[left] {\scriptsize 3} (N2l);
						\draw[-] (O) -- node[left] {\scriptsize 4} (N2r);
						\draw[-] (O) -- node[right] {\scriptsize 5} (N3l);
						\draw[-] (O) -- node[right] {\scriptsize 6} (N3r);
		}}}}
		$
		&
		$
		\vcenter{\hbox{{\tikz[baseline=.1ex]{
						\node[hyperedgewide] (O) {$c$};
						\node[node, left=3mm of O, label=left:{\scriptsize $(1)$}] (N1l) {};
						\node[node, right=3mm of O, label=right:{\scriptsize $(2)$}] (N1r) {};
						\node[node, below left = 7mm and 3mm  of O] (N2l) {};
						\node[node, below right = 7mm and 3mm  of O] (N2r) {};
						\draw[very thick,-latex] (N2l) to[bend left = 20] node[above] {$A$} (N2r);
						\draw[very thick,-latex] (N2l) to[bend right = 20] node[below] {$B$} (N2r);
						\draw[-] (N1l) -- node[above] {\scriptsize 1} (O);
						\draw[-] (N1r) -- node[above] {\scriptsize 2} (O);
						\draw[-] (O) -- node[left] {\scriptsize 3} (N2l);
						\draw[-] (O) -- node[right] {\scriptsize 4} (N2r);
		}}}}
		$
		&
		$
		\vcenter{\hbox{{\tikz[baseline=.1ex]{
						\node[hyperedgewide] (O) {$j$};
						\node[node, left=3mm of O, label=left:{\scriptsize $(1)$}] (N1l) {};
						\node[node, right=3mm of O, label=right:{\scriptsize $(2)$}] (N1r) {};
						\node[node, below left = 7mm and 3mm  of O] (N2l) {};
						\node[node, below right = 7mm and 3mm  of O] (N2r) {};
						\draw[very thick,-latex] (N2l) -- node[below] {$A$} (N2r);
						\draw[-] (N1l) -- node[above] {\scriptsize 1} (O);
						\draw[-] (N1r) -- node[above] {\scriptsize 2} (O);
						\draw[-] (O) -- node[left] {\scriptsize 3} (N2l);
						\draw[-] (O) -- node[right] {\scriptsize 4} (N2r);
		}}}}
		$
		&
		$
		\vcenter{\hbox{{\tikz[baseline=.1ex]{
						\node[node, label=left:{\scriptsize $(1)$}] (N1) {};
						\node[node, right=8mm of N1] (N2) {};
						\node[node, right=8mm of N2, label=right:{\scriptsize $(2)$}] (N3) {};
						\node[hyperedge,below = 3.6mm of N2] (E) {$O$};
						\draw[-] (N2) -- node[left] {\scriptsize 1} (E);
						\draw[very thick,-latex] (N1) -- node[above] {$A$} (N2);
						\draw[very thick,-latex] (N2) -- node[above] {$B$} (N3);
		}}}}
		$
		\\
	\end{tabular}
\end{center}
The new embedding functions $\bar{\mu}$ and $\htreebar$ are defined as in the previous cases but with new hypergraphs:
\begin{itemize}
	\item $\bar{\mu}(p) \eqdef p$ for $p \in \mathit{Pr}$ where $\rk(p) = 2$;
	\item $\bar{\mu}(A \bullet_i B) = \times\left(\bar{R}^i\left(\bar{\mu}(A),\bar{\mu}(B)\right)\right)$ ($i \in \I \setminus \C$); $\bar{\mu}(A \bullet_c B) = \times\left(\bar{K}^c\left(\bar{\mu}(A),\bar{\mu}(B)\right)\right)$ ($c \in \C$); etc.
	\item $\htreebar(A) = A^\bullet$; $\htreebar((\Pi_1,\Pi_2)^i) = \bar{R}^i(A,B)[\bar{r}_1/\htreebar(\Pi_1),\bar{r}_2/\htreebar(\Pi_2)]$; etc.
\end{itemize}
\begin{example}
	If $\Pi = (p,(\langle q\rangle^j,(r,(s,t)^O)^c)^O)^O$ and $A$ is a formula, then $\htreebar(\Pi) \to \bar{\mu}(A)$ is of the form:
	$$
	\vcenter{\hbox{{\tikz[baseline=.1ex]{
					\node[hyperedgewide] (O) {$j$};
					\node[node, left=16mm of O] (N1l) {};
					\node[node, left=20mm of N1l, label=left:{\scriptsize $(1)$}] (N-1) {};
					\node[node, right=16mm of O] (N1r) {};
					\node[node, below left = 6mm and 7mm  of O] (N2l) {};
					\node[node, below right = 6mm and 7mm  of O] (N2r) {};
					\node[hyperedgewide, right=16mm of N1r] (Op) {$c$};
					\node[node, right=16mm of Op, label=right:{\scriptsize $(2)$}] (Np1r) {};
					\node[node, below left = 3.3mm and 7mm  of Op] (Np2l) {};
					\node[node, below right = 3.3mm and 7mm  of Op] (Np2r) {};
					\node[node, below = 6.5mm of Op] (Np3) {};
					\node[hyperedge,below = 3.6mm of N1l] (Ass1) {$O$};
					\node[hyperedge,below = 3.6mm of N1r] (Ass2) {$O$};
					\node[hyperedge,below = 3.6mm of Np3] (Ass3) {$O$};
					\draw[-] (N1l) -- node[left] {\scriptsize 1} (Ass1);
					\draw[-] (N1r) -- node[left] {\scriptsize 1} (Ass2);
					\draw[-] (Np3) -- node[left] {\scriptsize 1} (Ass3);
					\draw[very thick,-latex] (Np2l) to[bend left = 15] node[below] {$r$} (Np2r);
					\draw[very thick,-latex] (Np2l) to[bend right = 20] node[below left] {$s$} (Np3);
					\draw[very thick,-latex] (Np3) to[bend right = 20] node[below right] {$t$} (Np2r);
					\draw[-] (N1r) -- node[above] {\scriptsize 1} (Op);
					\draw[-] (Np1r) -- node[above] {\scriptsize 2} (Op);
					\draw[-] (Op) -- node[above] {\scriptsize 3} (Np2l);
					\draw[-] (Op) -- node[above] {\scriptsize 4} (Np2r);
					\draw[very thick,-latex] (N-1) -- node[below] {$p$} (N1l);
					\draw[very thick,-latex] (N2l) -- node[below] {$q$} (N2r);
					\draw[-] (N1l) -- node[above] {\scriptsize 1} (O);
					\draw[-] (N1r) -- node[above] {\scriptsize 2} (O);
					\draw[-] (O) -- node[above left] {\scriptsize 3} (N2l);
					\draw[-] (O) -- node[above right] {\scriptsize 4} (N2r);
	}}}}
	\;\;\to\;\; \bar{\mu}(A)
	$$
	The antecedent combines hypertree-like parts with string-like parts connected in series or in parallel (this reminds us electrical networks). This is what we wanted to reach from the beginning: to completely avoid adding structural rules for different modes but to make them a part of sequent structure. In the end, we obtain a nice visual form of sequents of $\NLM$ highlighting their associative/commutative parts.
\end{example}
Finally, the theorem similar to Theorem \ref{th_embed_NLM} but where $\mu$ is replaced by $\bar{\mu}$ can be proved, so the functions $\htreebar$ and $\bar{\mu}$ embed $\NLM$ in $\HL$. As a corollary, the Lambek calculus with brackets \cite{Morrill92}, for which $|\I|=|\A| = 1$, $\C = \emptyset$, $|\J| = 1$, can be embedded in $\HL$ as well using $\bar{\mu}$.

\begin{remark}
	If $|\A|>1$, then it is unclear whether we can define an embedding of $\NLM$ in $\HL$. Note that if $\A = \{O_1,O_2\}$ and if we use the hypergraph $\bar{R}^O(A,B)$ to represent these modes (for $O \in \{O_1,O_2\}$), then they start interacting with each other in an undesirable way: $\htreebar(((p,q)^{O_1},r)^{O_2}) = \htreebar((p,(q,r)^{O_2})^{O_1})$.
\end{remark}

\section{Conclusion}\label{sec_discussion_conclusion}

The hypergraph Lambek calculus is a general framework with cumbersome but visually informative syntax using hyperedge replacement. The embedding of $\NLM$ in $\HL$ presented in this work incorporates the structural postulates for modes in the sequent structure, so they become properties of hypergraphs rather than external rules. It should be noted that Cases 1-4 considered above imply that all the calculi $\NL$, $\mathrm{NLP}$, $\LC$, $\LP$, and the Lambek calculus with brackets \cite{Morrill92} can be embedded in $\HL$.

The results of this paper along with those of \cite{Pshenitsyn22} justify the role of $\HL$ as of a very general pure logic of residuation, maybe the most general one (since hypergraphs may be viewed as the most general discrete finite structures). In that respect, $\HL$ competes with $\MILLFO$, in which many extensions of $\LC$ can be embedded as well. We claim that $\HL$ is a better candidate to the role of such an ``umbrella logic'' because the embeddings presented in \cite{Pshenitsyn22} and in this paper are \emph{strong}: each derivation in a logic ($\LC$, $\LP$ etc.) can be translated rule-by-rule into that in the hypergraph Lambek calculus, and vise versa. The embeddings in $\MILLFO$ are not strong in this sense, since a proof in $\MILLFO$ must be reorganized before translating it back into a proof in a logic. Moreover, we have shown existence of a straightforward and natural translation of $\NLM$ in $\HL$ using hypertrees --- while no such embedding of $\NLM$ in $\MILLFO$ have been found. 


\bibliographystyle{eptcs}
\bibliography{AMSLO23}

\end{document}